\documentclass[11pt]{article}
\usepackage{url}
\usepackage{epsf}
\usepackage{graphicx}
\usepackage{amsfonts}
\usepackage{amssymb}
\usepackage{amsmath}
\usepackage{latexsym}
\usepackage{setspace, float}
\usepackage{color}
\usepackage[normalem]{ulem}
\usepackage[table]{xcolor}

\usepackage{makeidx}
\usepackage{multirow}
\usepackage{verbatim}
\usepackage{xy}\xyoption{all}

\urlstyle{rm}

\newtheorem{theorem}{Theorem}

\newcommand{\qed}{\rule{0.5em}{1.5ex}}
\newcommand{\fqed}{{\hfill~\qed}}
\newenvironment{proof}{{\noindent \bf Proof.}}
                      {{\hfill \fqed} \vspace{1em}}


\topmargin 0in                  
\headheight 0pt                 
\headsep 0in                    
\textheight 9in                 

\textwidth 6.5in
\oddsidemargin 0in              


\newcommand\C{\mathcal{C}}

\newcommand{\mst}{${\sc MST}$}
\newcommand{\udg}{${\sc UDG}$}

\floatstyle{ruled}
\newfloat{algorithm}{htbp}{loa}
\floatname{algorithm}{Algorithm}

\title{Connectivity of Graphs Induced by Directional Antennas}
\author{
Mirela Damian\thanks{Villanova University, Villanova, USA, \texttt{mirela.damian@villanova.edu}.
\newline Partially supported by NSF grant CCF-0728909.}
\and
Robin Flatland\thanks{Siena College, Loudonville, USA, \texttt{flatland@siena.edu}.}}
\date{}

\begin{document}
\maketitle

\begin{abstract}
This paper addresses the problem of finding an orientation and a minimum radius
for directional antennas of a fixed angle placed at the points of a planar set
$S$, that induce a strongly connected communication graph.
We consider problem instances in which antenna angles are fixed at $90^\circ$
and $180^\circ$, and establish upper and lower bounds for the minimum radius
necessary to guarantee strong connectivity. In the case of $90^\circ$ angles, we
establish a lower bound of $2$ and an upper bound of $7$.
In the case of $180$ angles, we establish a lower bound of
$\sqrt{3}$ and an upper bound of $1+\sqrt{3}$.
Underlying our results is the assumption that the unit disk graph for
$S$ is connected.
\end{abstract}

\section{Introduction}
Let $S$ be a set of points in the plane representing wireless nodes.
Assume that each node is equipped with one directional antenna,
geometrically represented as a wedge with angular aperture $\alpha$ and radius $r$
(see Figure~\ref{fig:antenna}a). An antenna orientation is given by the
counterclockwise angle $\theta$ measured from the positive $x$-axis to the bisector
of the wedge.
The \emph{communication graph} $G(S)$ formed by the antennas placed at points
in $S$ is a directed graph with vertex set $S$ and edges $\overrightarrow{ab}$
directed from $a$ to $b$, if and only if the antenna wedge at $a$ covers $b$.
Let $\udg(S)$ denote the unit disk graph for $S$ (i.e., the graph in which any two
points in $S$ within unit distance are connected by an edge).
In this paper we address the following problem.

\begin{center}
\begin{minipage}{0.75\linewidth}
Let $S$ be a planar point set such that $\udg(S)$ is connected.
For a fixed angle $\alpha$, find an orientation $\theta$
of the antennas at the points in $S$ and a minimum radius $r$ for which
the communication graph $G(S)$ is \emph{strongly connected}.
\end{minipage}
\end{center}

\medskip
\noindent
We consider instances of this problem for $\alpha = 180^\circ$ (Section~\ref{sec:180}) and
$\alpha = 90^\circ$ (Section~\ref{sec:90}), and establish lower and upper bounds for the
minimum radius required to guarantee strong connectivity.
For the case $\alpha = 90^\circ$, we establish a lower bound of $2$ and an upper bound
of $7$.
For the case $\alpha = 180^\circ$ angles, we establish a lower bound of
$\sqrt{3}$ and an upper bound of $1+\sqrt{3}$. Underlying these
results is the assumption that $\udg(S)$ is connected. We note that the recent related work by 
Ben-Moshe et. al~\cite{bcckms-dawn-10}
also considers $90^\circ$-antennas but with orientations
restricted to the four standard quadrant directions, and it gives an algorithm for
constructing a bidirectional communication graph using a radius value 
$r$ that is at most twice the optimal value.

Throughout the paper, we use the notation $\mst(S)$ to refer to a minimum spanning tree of $S$
of \emph{maximum degree five}, which can be constructed using the algorithm
by Wu et al.~\cite{WDJLH06}. We say that a point $a \in S$ \emph{sees}
$b \in S$ if and only if the antenna wedge at $a$ covers $b$.

\begin{figure}[htpb]
\centering
\includegraphics[width=0.4\linewidth]{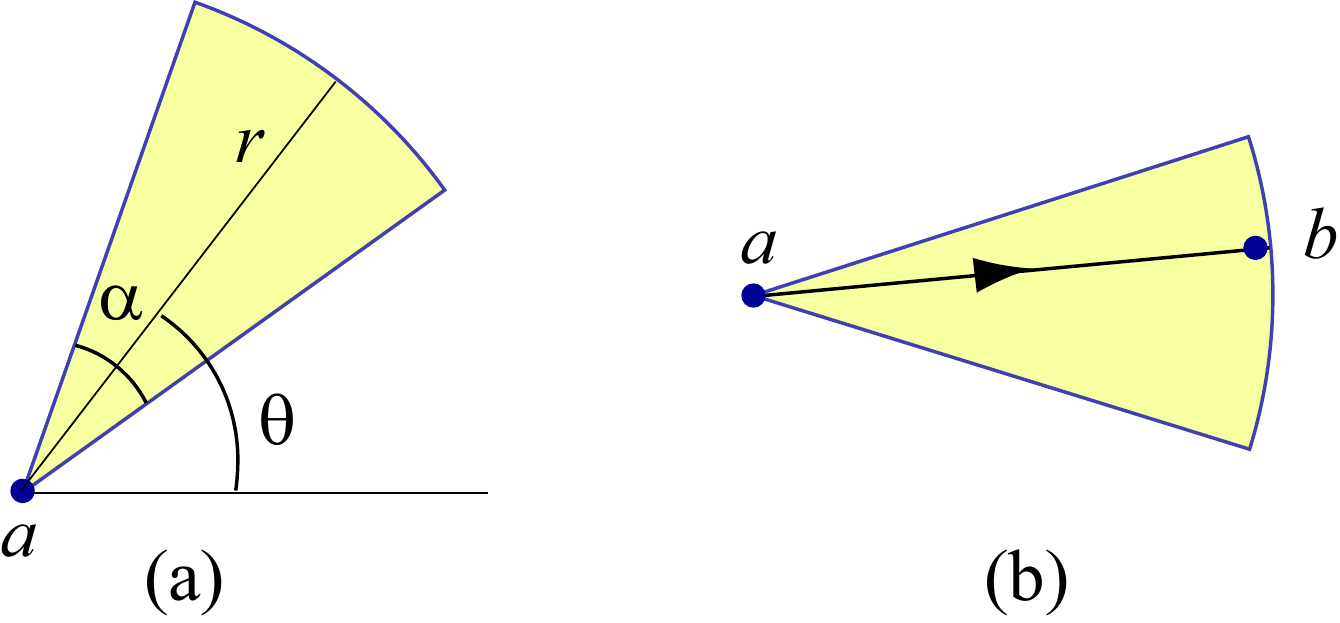}
\caption{(a) Directional antenna represented as a wedge of angle $\alpha$ and radius $r$ (b) $ab$ is a
directed edge in the communication graph.}
\label{fig:antenna}
\end{figure}

\section{$180^\circ$ Antennas}
\label{sec:180}

\begin{theorem}
For directional antennas with $\alpha = 180^\circ$, $r \geq \sqrt{3} $ is
sometimes necessary to build a strongly connected communication graph.
\label{thm:180main}
\end{theorem}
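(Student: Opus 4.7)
My plan is to exhibit a small planar point set $S$ satisfying the hypothesis that $\udg(S)$ is connected, yet for which no orientation of $180^\circ$ antennas of common radius strictly less than $\sqrt{3}$ yields a strongly connected communication graph. I would take $S = \{o, a_1, a_2, a_3\}$, where $o$ is the origin and $a_1, a_2, a_3$ lie on the unit circle around $o$ at polar angles $0^\circ$, $120^\circ$, and $240^\circ$. A direct calculation gives $|o\, a_i| = 1$ for each $i$ and $|a_i\, a_j| = \sqrt{3}$ for $i \ne j$ (from the identity $|a_i\, a_j|^2 = 2 - 2\cos 120^\circ = 3$), so $\udg(S)$ is the star $K_{1,3}$ centered at $o$, which is connected as required.

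Next I would fix any antenna orientations with common radius $r < \sqrt{3}$ and derive a contradiction with strong connectivity. Since the pairwise distance $\sqrt{3}$ among $a_1, a_2, a_3$ exceeds $r$, none of these three points can see any of the other two; hence each $a_i$ can receive an incoming edge only from $o$. The crux of the proof is then the geometric claim that the $180^\circ$ antenna at $o$ cannot simultaneously cover all of $a_1, a_2, a_3$: its angular aperture at $o$ is exactly $180^\circ$, whereas the three directions from $o$ to $a_1, a_2, a_3$ are equally spaced and span a total arc of $240^\circ$, too wide to fit in any closed semicircle. A pigeonhole argument at $o$ then forces at least one of $a_1, a_2, a_3$ to lack an incoming edge, so $G(S)$ is not strongly connected.

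The only genuinely delicate step is this ``at most two out of three'' claim for the half-disk at $o$; I would settle it by observing that a closed half-plane through $o$ could contain all three $a_i$ only if its boundary line passed through two of them, which would force two of the $a_i$ to be antipodal on the unit circle, contradicting the $120^\circ$ spacing. Combining this observation with the bound $r < \sqrt{3}$, which disables every potential edge among the $a_i$, completes the argument and exhibits a $4$-point witness realizing the lower bound $\sqrt{3}$.
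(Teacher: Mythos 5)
Your proof is correct and rests on the same geometric obstruction as the paper's: three UDG-neighbors spaced $120^\circ$ around a hub are pairwise $\sqrt{3}$ apart and cannot all fit in one closed half-plane through the hub, so with $r<\sqrt{3}$ some neighbor is left with in-degree zero. The only difference is cosmetic --- you use a minimal four-point star as the witness and conclude directly from the in-degree argument, while the paper uses a larger configuration built from the same $120^\circ$ angles and argues via a cut separating $p$'s uncovered neighbor from the rest.
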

\begin{proof}
Figure~\ref{fig:lowerbound180} shows a point set for which
$r \geq \sqrt{3}$ is
necessary.
The solid line segments show the UDG; all angles are $120^\circ$.
Note that for any $r < \sqrt{3}$, any antenna placed at the
point labeled $p$ covers exactly two of its neighbors in the UDG
and no other points. Split the UDG into two pieces, $L$ and $R$,
by removing the edge
connecting $p$ to its uncovered neighbor. Let $R$ be the
piece containing $p$. Observe that
for any point $p' \neq p$ in $R$,
the distance from $p'$ to any point in $L$ is at least $\sqrt{3}$.
Since messages must flow from $R$ to $L$,
$r \geq \sqrt{3}$ is necessary.
\end{proof}
\begin{figure}[hp]
\centering
\includegraphics[width=0.65\linewidth]{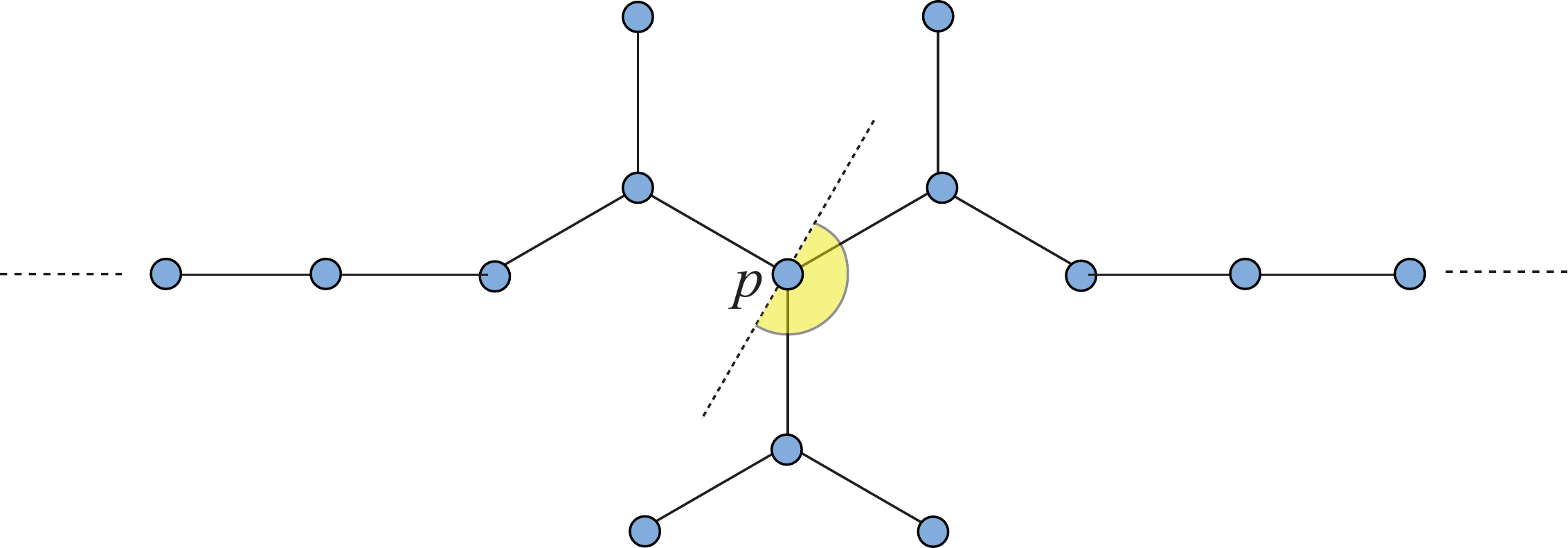}
\caption{$r \geq \sqrt{3}$ is necessary for this point set when $\alpha = 180^\circ$.}
\label{fig:lowerbound180}
\end{figure}

\begin{theorem}
For directional antennas with $\alpha = 180^\circ$, $r=1+\sqrt{3}$
suffices to build a strongly connected communication graph for a planar
point set $S$.
\end{theorem}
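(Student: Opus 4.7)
I would start from $\mst(S)$, the max-degree-$5$ minimum spanning tree introduced earlier. Because $\udg(S)$ is connected, every edge of $\mst(S)$ has length at most $1$, and a standard MST lemma gives the angular constraint that any two edges sharing an endpoint form an angle of at least $60^\circ$. Thus around any vertex $v$ the (up to five) MST neighbors partition the circle into gaps of size at least $60^\circ$, and no single gap can exceed $360^\circ - 4\cdot 60^\circ = 120^\circ$. The slack in $r = 1 + \sqrt{3} \approx 2.73$ above unit distance is exactly what will pay for the unavoidable two-hop detours when an MST edge is not covered directly by an antenna.

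My plan is to root $\mst(S)$ at an arbitrary vertex and, traversing top-down, orient each $180^\circ$ antenna so that its half-plane contains the parent direction together with the two angularly adjacent siblings; a pigeonhole argument on the gaps (each $\geq 60^\circ$) shows that this $3$-neighbor window always fits inside a half-plane, since the two flanking gaps sum to at most $360^\circ - 3\cdot 60^\circ = 180^\circ$. This leaves at most two children of any vertex uncovered by the parent's antenna. For each uncovered child $c$ of $v$, I would locate the angularly adjacent sibling $w$ that $v$ does cover, and observe that $\angle cvw \leq 120^\circ$; the law of cosines with $|cv|,|vw|\le 1$ then yields $|cw|\leq\sqrt{3}$, and combined with the unit-length hop $v\to w$ this gives a detour path $v \to w \to c$ of total length at most $1+\sqrt{3}$, provided $w$'s antenna is in turn oriented to cover $c$. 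The reverse direction $c \to v$ is direct, since $c$'s antenna, oriented by the same rule, covers its own parent $v$.

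The main obstacle I anticipate is global consistency of the orientation rule: the relay $w$ needed to route $v \to c$ has its orientation dictated by $w$'s own MST neighborhood, not by $v$'s, so $w$'s antenna does not automatically cover $c$. Resolving this will require a case analysis on the angular configuration at $v$ together with alternative detours (possibly through one of $c$'s own children, or through a second sibling on the same side of $v$) when the single two-hop relay fails. The $\geq 60^\circ$ empty angular arc guaranteed opposite the parent at every vertex, coupled with the extra reach of $r = 1 + \sqrt{3}$, appears to provide just enough flexibility to route around these conflicts; showing that the local bookkeeping always closes up into a strongly connected whole is, I expect, where most of the geometric work will live.
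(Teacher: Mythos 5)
Your plan correctly identifies two ingredients the paper also relies on --- unit-length edges of the degree-five minimum spanning tree with $\geq 60^\circ$ angular separation at every vertex, and a two-hop detour of length at most $1+\sqrt{3}$ through a sibling at angular distance at most $120^\circ$ --- but it stops exactly where the real work begins, and the local claims it does make do not all hold. First, the pigeonhole bound ``the two flanking gaps sum to at most $360^\circ - 3\cdot 60^\circ = 180^\circ$'' is valid only at a vertex of degree five; at a degree-three or degree-four vertex the parent together with its two angularly adjacent neighbors can span more than $180^\circ$ (e.g.\ parent at $0^\circ$, children at $110^\circ$ and $250^\circ$), so the proposed orientation rule is not always realizable by a half-plane. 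Second, even when it is, the angularly adjacent \emph{covered} sibling $w$ of an uncovered child $c$ need not satisfy $\angle cvw \le 120^\circ$ (parent at $0^\circ$, children at $100^\circ$ and $260^\circ$ gives an angular distance of $160^\circ$ between the children); the paper obtains the $120^\circ$ bound only by pairing the two children that form the \emph{smallest} angle at the vertex, which is at most $120^\circ$ precisely because that case involves three or four children, not by pairing each uncovered child with its nearest covered neighbor.

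The deeper gap is the one you name yourself: nothing forces the relay $w$'s antenna, oriented by $w$'s own local rule, to cover $c$, and you defer this to an unspecified case analysis. That consistency problem is the entire content of the theorem, and the paper's resolution is structurally different from a top-down per-vertex rule. It partitions the tree into small \emph{groups} (a height-one node together with its children), explicitly pairs nodes within each group and aligns each pair's two antennas along their connecting segment so that they cover opposite half-planes --- mutual coverage of a pair, and hence availability of the relay, then holds by construction rather than by accident --- and it runs an induction on the number of groups whose hypothesis carries the extra invariant that the root of every processed subtree can reach \emph{any} point within unit distance. That invariant is what lets messages cross group boundaries in both directions, and a further geometric argument (using the fact that spanning-tree edges do not cross) is still needed to route from a non-root group member down to the root of an attached subtree. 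Without some analogue of this global bookkeeping, or a proof that your local rule can always be made consistent, the proposal does not yet establish strong connectivity.
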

\begin{proof}
We begin by constructing $\mst(S)$.  Let a point in $S$ with a highest
$y$ coordinate be the root.
We first partition the nodes of $\mst(S)$ into groups, then show inductively how to
orient the antennas in each group to form the communication graph.
To identify the groups, pick a node of height one and place it in a group
along with its children.  Conceptually imagine removing this group from the
tree, then repeat the process until no nodes are left (with the
possible exception of the root).  See Figure~\ref{fig:groups} for an
example of node grouping.
\begin{figure}[htpb]
\centering
\includegraphics[width=0.35\linewidth]{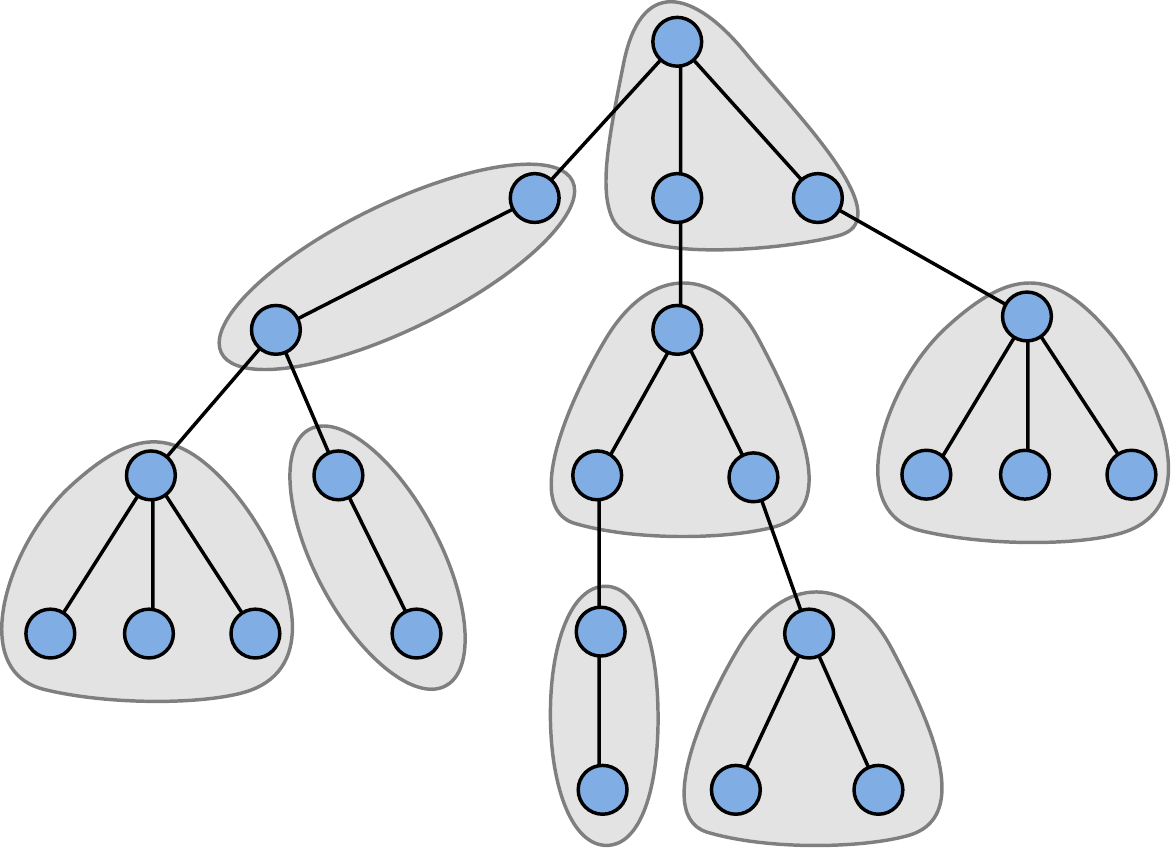}
\caption{Theorem~\ref{thm:180main}: Tree partitioned into groups.}
\label{fig:groups}
\end{figure}

We prove the theorem inductively on the number of groups $g$ in $\mst(S)$.
The base case corresponds to a tree with one group only ($g=1$). Let $p$ be
the parent node, and $c_1$ an arbitrary child of $p$.
Orient the antennas at $p$ and $c_1$ so that they are both aligned
with the segment $pc_1$ and cover opposite sides of the plane.
See Figure~\ref{fig:placement180basecase}.
This placement establishes direct bidirectional
communication between $p$ and $c_1$ since the two cones overlap along
the segment $pc_1$.
For the other children (if any), orient their antennas in any direction
that includes $p$. This enables direct communication from each child
$c$ to $p$.
Communication from $p$ to $c$ is indirect if $c$ lies outside the
antenna wedge at $p$, in which case the communication path is
$(p, c_1, c)$. Note that the distance from $c_1$ to any other child
of $p$ is at most $2$, therefore the radius $r > 2$ claimed by the
theorem guarantees connectivity from $c_1$ to $c$.

\begin{figure}[hptb]
\centering
\includegraphics[width=0.3\linewidth]{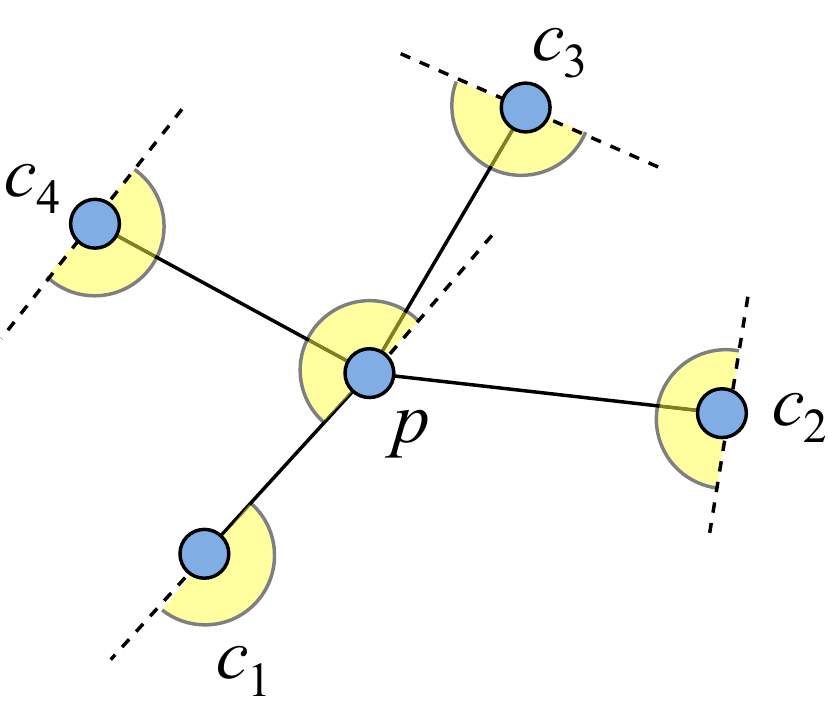}
\caption{Theorem~\ref{thm:180main}: Orientation of antennas in the base case.}
\label{fig:placement180basecase}
\end{figure}

The inductive hypothesis claims that, in the case of a tree
$\mst(S)$ composed of $g$ node groups, for some $g \ge 1$, there is an
orientation of antennas at the nodes of $\mst(S)$ that satisfies the theorem.
In addition, the inductive hypothesis requires that the root of $\mst(S)$
can reach any hop within a unit distance.
Note that this is true of the base case, with help from $c_1$
if the hop is not covered by $p$'s wedge.

To prove the inductive step, consider a tree $\mst(S)$ with $g+1$
groups. Let $p$ be the root of $\mst(S)$, and call the group containing
$p$ the \emph{root group}.
We discuss four cases, depending on the number of nodes in the root
group. The antenna placement for each of these cases is depicted in
Figure~\ref{fig:placement180}.
Observe that in the case of a triplet (Figure~\ref{fig:placement180}b),
$p$'s antenna is oriented so that is covers both children.
It can be verified, in the cases with one and two children depicted
in Figure~\ref{fig:placement180}(a, b),
$r \leq 2$ guarantees strong connectivity, and $p$ can reach any
hop within a unit distance.

The cases with the root group composed of four and five nodes
follow a similar pattern, once divided into pairs and triplets of nodes,
as depicted in Figure~\ref{fig:placement180} (c, d).
The dotted lines indicate children that have been paired.
Pairs consisting of two children
(see $c_2,c_3$ in Figure~\ref{fig:placement180}c) must be carefully
selected to achieve $r \leq 1 + \sqrt{3}$; the
requirement is that the two paired children form a smallest
angle at $p$. Since the cases under discussion involve
three or four children of $p$, an upper bound on a smallest such
angle is $120^\circ$. It follows that the distance between the paired
children is at most $\sqrt{3}$.
Then $r \leq 2$ guarantees the following:
(i) each pair and triplet of nodes is strongly connected, (ii)
each node in a pair can send messages to $p$ (because a node in a
pair can reach its counterpart node with
a setting $r = \sqrt{3}$, and at least one node in a pair can
reach $p$ with a setting $r = 1$), and
(iii) $p$ can reach any hop within unit distance (which includes its children).
It follows that the communication graph for the root group is strongly connected.
It remains to show that $\mst(S)$ is strongly connected.

(We pause here to note that a group cannot have five children, since each node
has degree at most five in $\mst(S)$, and the parent accounts for one of these
degree units. The one exception is the root of the entire tree. But since the
root is the point with a highest $y$ coordinate, all its children must lie in
a halfplane. The minimum angle separating two adjacent points in a planar
minimum spanning tree is $60^\circ$, so the root can also have most four
children.)

\begin{figure}[htpb]
\centering
\includegraphics[width=0.85\linewidth]{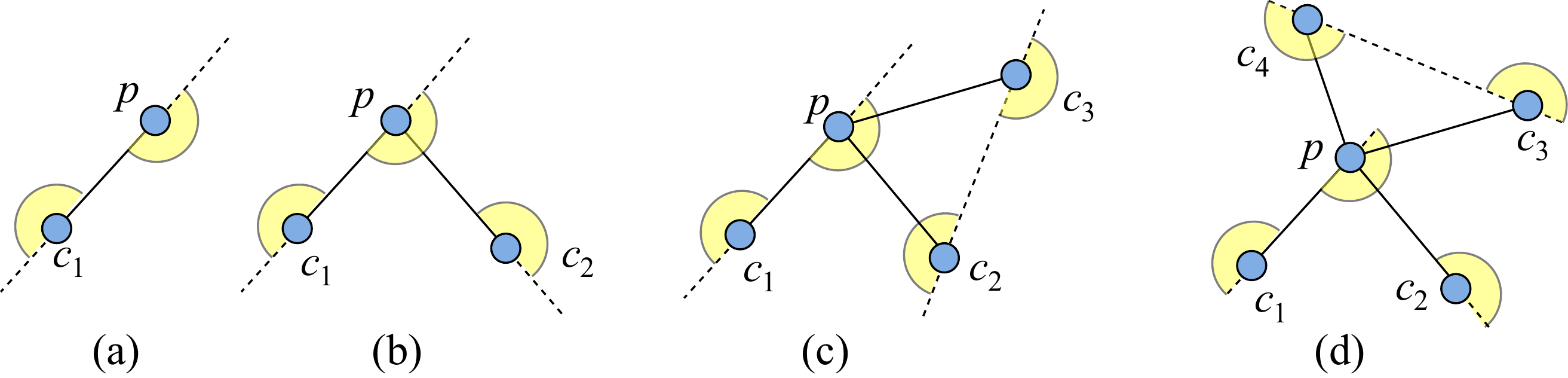}
\caption{Theorem~\ref{thm:180main}: Orientation of antennas in node groups.}
\label{fig:placement180}
\end{figure}

By the inductive hypothesis, each subtree $T \subset \mst(S)$
attached to a node in the root group forms a strongly connected
communication graph. In addition, the root of $T$ can reach any
hop within unit distance, and therefore it can send messages up to
the point of attachment. So to complete the inductive step, we
must show that each node in the root group can send messages down
to the root of an attached subtree. 
This is trivially true for $p$, since $p$ can communicate with any
point within unit distance, as established above.

Now consider the case when a child $c$ of $p$ is the attachment point
for a subtree with root $q$.  If $c$ plays the role of $c_1$ in a
pair or a triplet (as in Figures~\ref{fig:placement180}a,b) and
the antenna wedge at $c_1$ does not cover $q$, then $r \leq 2$
suffices to establish the communication path $(c_1, p, q)$.
If $c$ plays the role of $c_2$ in a triplet (as in Figure~\ref{fig:placement180}b)
and $c_2$ does not see $q$, then there are two cases to
consider. First, if $p$ sees $q$, then the communication path is
$(c_2, p, q)$. Otherwise, $c_1$ must see $q$.
The segment $c_2q$ cannot cross $c_1p$ since minimum spanning tree
edges do not cross.
This implies that $q$ is confined to the shaded region from Figure~\ref{fig:bound180},
therefore the distance from $q$ to $c_1$ is at most $2$. It follows that
$r \leq 2$ establishes the communication path $(c_2, p, c_1, q)$.

\begin{figure}[htpb]
\centering
\includegraphics[width=0.3\linewidth]{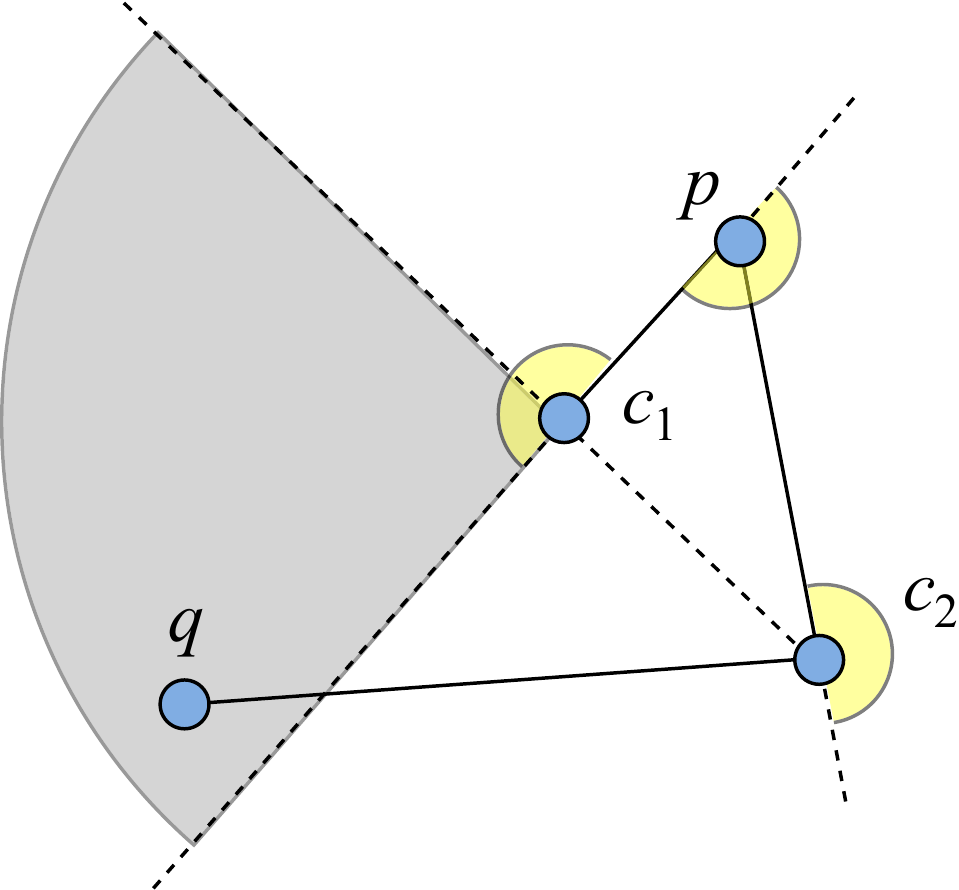}
\caption{The communication path from $c_2$ to $q$ is $(c_2, p, q)$.}
\label{fig:bound180}
\end{figure}

Finally, suppose that $q$ is attached to a child of $p$ that was
paired with another child of $p$, such as $c_2$ in
Figure~\ref{fig:placement180}c. If $c_2$ cannot see $q$,
then $c_3$ must be able to see $q$, so $c_2$ can use the communication path
$(c_2, c_3, q)$. Since the distance between $c_2$ and $c_3$
is at most $\sqrt{3}$, the last hop from $c_3$ to $q$ is no
longer than $1 + \sqrt{3}$, matching the upper bound on $r$
stated in the theorem.

It is possible that the root group contains a single node,
the root $p$ of $\mst(S)$. In this case, we deal
with $p$ separately by orienting its antenna in the negative $y$ direction.
The root is a highest point and therefore it can see all its
children, establishing direct communication with them.
By the inductive hypothesis, the children of $p$ can also send
messages to $p$, so this case is settled as well.
\end{proof}

\section{$90^\circ-$Antennas}
\label{sec:90}

\begin{theorem}
For directional antennas with $\alpha = 90^\circ$, $r \geq 2$ is
sometimes necessary to build a communication graph.
\end{theorem}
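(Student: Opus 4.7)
The plan is to adapt the strategy of Theorem~\ref{thm:180main}: exhibit a small UDG-connected point set containing a bottleneck node $b$ whose $90^\circ$ wedge cannot simultaneously cover two of $b$'s UDG neighbors, and then invoke a cut argument that forces some message to traverse a distance of at least $2$.

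The simplest witness I would try is three collinear points $a,b,c$ at positions $(0,0)$, $(1,0)$, $(2,0)$. The UDG consists of the edges $ab$ and $bc$ only (since $|ac|=2$) and is connected. At $b$, the rays to $a$ and $c$ are diametrically opposite ($180^\circ$ apart), so no $90^\circ$ wedge placed at $b$ can cover both neighbors; if the wedge covers neither, the graph is immediately not strongly connected, so assume by symmetry that the wedge covers $a$ but not $c$. Then in the communication graph $b$ has only the outgoing edge $\overrightarrow{ba}$, so any directed $b$-to-$c$ path must be of the form $b\to a\to c$, which forces $a$'s wedge to reach $c$ and hence $r\ge|ac|=2$.

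I expect the only real subtlety to be presentational: the authors may prefer a genuinely two-dimensional configuration in the style of Figure~\ref{fig:lowerbound180} rather than a degenerate collinear triple. In that case, I would replace $\{a,b,c\}$ with a central node $p$ carrying two (or more) unit-distance neighbors whose pairwise angular separation at $p$ strictly exceeds $90^\circ$, so that no $90^\circ$ wedge at $p$ can cover two of them at once. After choosing the uncovered neighbor, I would then follow the same bipartition scheme used for the $\sqrt{3}$ lower bound in Theorem~\ref{thm:180main}: split the UDG by removing the edge from $p$ to its uncovered neighbor, place $p$ in one piece, and verify that the minimum distance from any point in $p$'s piece to any point in the other piece is exactly $2$. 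Beyond confirming this distance (which amounts to choosing the angles so that the forced inter-piece hop has length $\ge 2$, possibly with the help of an outlying support point reachable only through the uncovered side), the argument is a one-step cut and requires no routine calculation.
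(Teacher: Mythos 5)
Your proposal is correct and is essentially the paper's own argument: the paper also uses collinear points at unit intervals along the $x$-axis, observes that a $90^\circ$ wedge at an interior point can cover only one side, and concludes that the forced detour through that point's covered neighbor requires a hop of length $2$. Your three-point instance $\{(0,0),(1,0),(2,0)\}$ is just a clean minimal version of that construction; the speculative two-dimensional variant in your last paragraph is unnecessary.
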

\begin{proof}
Consider a set of points positioned along the $x$-axis
at unit intervals.
An antenna placed at a point $p$ can only cover points to one side,
say its left side (so the antenna at $p$ is oriented to the left).
To enable messages to flow from points left of $p$ to points
right of $p$, the antenna at some point left of $p$ must be oriented
to the right. The nearest such point is $p$'s left neighbor, which
is at distance two from $p$'s right neighbor.
Therefore, $r = 2$ is necessary.
\end{proof}

\begin{theorem}
For any four points in general position, their $90^\circ-$antennas
can be oriented such that (i) a radius equal to the maximum pairwise
distance between the four points guarantees strong connectivity
of the four points, and (ii) the four antennas cover the entire plane.
\label{thm:fourpoints}
\end{theorem}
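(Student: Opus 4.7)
The plan is to choose the orientation so that the four bisectors are equally spaced at $90^\circ$; this automatically yields (ii), since the four wedges then tile the circle of directions. It also reduces (i) to a purely combinatorial question, because $r = D$ exceeds every pairwise distance, so whether $p_i$ sees $p_j$ depends only on whether the direction from $p_i$ to $p_j$ lies in $p_i$'s assigned quadrant. I would then force strong connectivity by exhibiting a directed Hamiltonian cycle in the resulting communication graph.

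Concretely, I would look for a directed Hamiltonian cycle on the four points together with a rotation of the tiling such that the four cycle-edge-directions fall one into each of the four quadrants; the vertex-to-quadrant assignment is then determined, since each vertex must be assigned the quadrant containing its outgoing cycle-edge. Four given directions admit such a separating rotation precisely when, sorted cyclically on the circle, each of the four consecutive gaps is at most $180^\circ$ and each pair of adjacent gaps sums to at least $90^\circ$. The analysis splits by convex hull. If the hull is a convex quadrilateral, I first try the hull-traversal cycle; its sorted direction-gaps equal the polygon's exterior angles, so the separability conditions translate into conditions on consecutive exterior angles that hold generically. When the hull cycle fails, as in a thin quadrilateral with two adjacent small exterior angles, I switch to one of the two diagonal Hamiltonian cycles $p_1 \to p_3 \to p_2 \to p_4 \to p_1$ or $p_1 \to p_2 \to p_4 \to p_3 \to p_1$; each of these replaces the two small adjacent gaps with a mixed gap structure built from two hull edges and two diagonals, which I expect to meet the separability conditions. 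If the hull is a triangle with an interior point $p_4$, I use a cycle that inserts $p_4$ between two consecutive hull vertices and run an analogous argument exploiting the strict interiority of $p_4$.

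The hard part will be verifying that at least one of the three undirected Hamiltonian cycles (or the three insertion cycles in the triangle case) always satisfies the separability conditions. The key idea I intend to exploit is that the three undirected cycles collectively use all six edges of the complete graph $K_4$ on the four points, so no single bad direction alignment can obstruct every cycle; formalizing this requires a careful case analysis on which pairs of consecutive gap sums can fail simultaneously, bounded through the convex-hull geometry. A secondary subtlety is the measure-zero possibility that a cycle edge-direction falls exactly on a quadrant boundary; general position lets me handle this by an infinitesimal perturbation of the rotation.
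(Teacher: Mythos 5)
Your first claim---that choosing the four wedges so that their direction sectors tile the circle ``automatically yields (ii)''---is false, and this is a load-bearing error. Four $90^\circ$ wedges whose direction sectors partition the circle of directions do \emph{not} in general cover the plane, because the apexes sit at different locations: take $p_1=(0,0)$ assigned the sector $[0^\circ,90^\circ]$, $p_2=(10,0)$ assigned $[90^\circ,180^\circ]$, $p_3=(5,10)$ assigned $[180^\circ,270^\circ]$, and $p_4=(5,-1)$ assigned $[270^\circ,360^\circ]$. The point $(6,-\tfrac{1}{2})$ lies below $p_1$ and $p_2$, to the right of $p_3$, and above $p_4$, so it is covered by none of the four wedges even though the sectors tile all directions. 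Coverage therefore depends on \emph{which} point receives \emph{which} quadrant, not merely on the tiling. In your scheme the vertex-to-quadrant assignment is fully determined by the Hamiltonian cycle and the separating rotation, both of which you spend entirely on connectivity, so you would still have to prove that the resulting assignment covers the plane---an additional geometric claim you have not addressed. (The paper is implicitly sensitive to this: its convex-case construction also tiles the directions, but it anchors the tiling to the two diagonals and derives coverage from the fact that the diagonals intersect and meet at an angle at most $90^\circ$.)

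The second gap is the one you flag yourself: the entire connectivity argument rests on the claim that at least one of the three Hamiltonian cycles (or the three insertion cycles in the triangle case) has edge directions admitting a separating rotation, and you offer only the heuristic that the three cycles jointly use all six edges of $K_4$. That is not a proof; the edge directions of a cycle are constrained by the requirement that its edge vectors sum to zero, and near-degenerate configurations (e.g., four nearly collinear points, where every cycle has two edge directions clustered near $0^\circ$ and two near $180^\circ$) show that the gap conditions are genuinely tight and need a real case analysis. Until both the coverage claim and the cycle-existence claim are established, the proposal is a plan rather than a proof. For contrast, the paper avoids both difficulties by an explicit two-case construction: for points in convex position it pairs opposite hull vertices along the two (intersecting) diagonals, aligns the wedge boundaries with the longer diagonal, and checks the four needed visibilities directly; for a point inside a triangle it aligns all wedge boundaries with the longest triangle side and pairs the interior point with the far vertex. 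You may find your tiling-plus-cycle framework recoverable, but only by proving, not asserting, that some cycle admits a rotation satisfying the separation conditions \emph{and} that the induced quadrant assignment covers the plane.
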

\begin{proof}
Consider first the case when the four points lie in convex
position. Let $a, b, c, d$ be the points in clockwise order around
the hull. Since $abcd$ is convex, the segments $ac$ and $bd$ must
intersect, as illustrated in Figure~\ref{fig:fourpoints}a.
Assume without loss of generality that $ac$ is the longer of the
two segments, and therefore the projection of at least one of $b$
and $d$ onto the line supported by $ac$ lies on the segment $ac$.
The orientation of antennas depends on the counterclockwise
angle $\beta$ from $ac$ to $bd$. We will assume $\beta \leq 90^\circ$;
the case when $90^\circ < \beta \leq 180^\circ$ is handled
symmetrically by reflection about the vertical.
It is not difficult to see that the orientation of antennas
from Figure~\ref{fig:fourpoints}a covers the entire plane, since
$ab$ and $cd$ intersect and $\beta$ is less than $90^\circ$. This
settles claim (ii) of the lemma. We now turn to claim (i) of the lemma.

\begin{figure}[htpb]
\centering
\includegraphics[width=\linewidth]{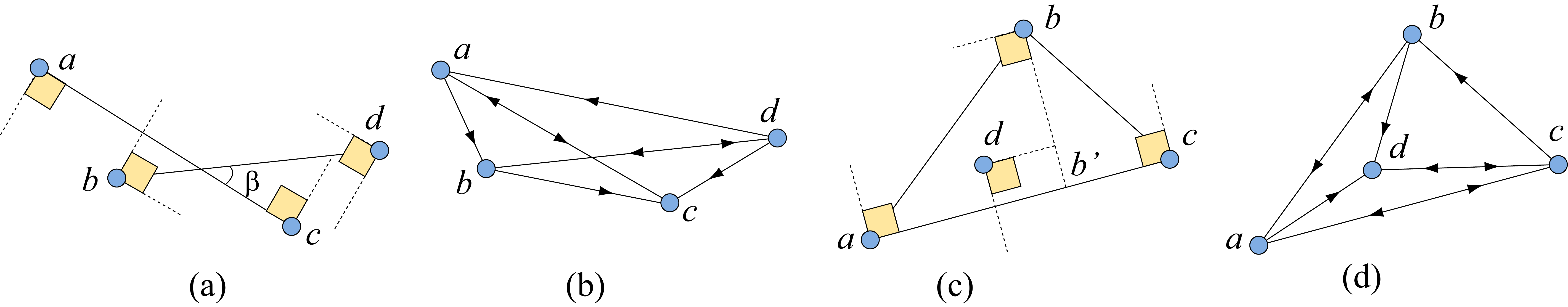}
\caption{ Antennas for points $a, b, c, d$ in convex position (a) and corresponding communication graph (b).
Antennas for points $a, b, c, d$ in non-convex position (c) and corresponding communication graph (d).}
\label{fig:fourpoints}
\end{figure}

Let each antenna wedge have radius equal to the maximum
pairwise distance between $a, b, c, d$.
First note that, for each node pair $(a, c)$ and $(b, d)$, each
node is contained in the antenna wedge of the counterpart node,
enabling direct communication between the nodes within a pair.
Communication between the pairs is settled as follows. Assume
that it is the projection of $d$ that lies on the segment
$ab$, as shown in Figure~\ref{fig:fourpoints}a.
Then $d$ is contained in $a$'s wedge, and $d$'s wedge
contains $c$, thus enabling full communication between the pairs, as
illustrated in Figure~\ref{fig:fourpoints}b.

Consider now the case when the four points do not lie
in convex position. Then three of the points, say $a, b, c$, comprise a
triangle that contains the fourth point, $d$. See Figure~\ref{fig:fourpoints}c.
Assume without loss of generality that $ac$ is a longest edge of
$\triangle abc$. Then the the projection $b'$ of $b$ onto $ac$
lies interior to the segment $ac$. Let $\triangle abb'$ contain $d$
(the situation when $\triangle cbb'$ contains $d$ is vertically
symmetric). The antenna orientation is depicted in
Figure~\ref{fig:fourpoints}c: all antenna wedges have one boundary line
parallel to $ac$; the antennas at points $a$ and $b$ face each other,
and similarly at points $c$ and $d$. This guarantees coverage of the entire
plane. In terms of connectivity, note that the nodes within each pair
$(a, b)$ and $(c, d)$ can communicate directly in both directions.
Since $d \in \triangle abb'$, both $a$ and $b$ can see $d$, and $c$ can
see $a$ and $b$ (recall that $ac$ is the longer side of $\triangle abc$,
therefore $\angle acb$ is acute, which implies that $c$ sees $b$).
This establishes full communication between the pairs.
\end{proof}

\begin{theorem}
For directional antennas with $\alpha = 90^\circ$, $r=7$ suffices
to build a strongly connected communication graph for a planar point set $S$.
\label{thm:90main}
\end{theorem}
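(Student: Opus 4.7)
The plan is to adapt the MST-based strategy from the $180^\circ$ proof, using Theorem~\ref{thm:fourpoints} to orient antennas inside small groups. First I would build $\mst(S)$ of maximum degree five and root it at a highest-$y$ point. I would then partition the nodes into groups exactly as in the $180^\circ$ proof, iteratively picking a height-one node and placing it in a group with its (currently leaf) children, then removing the group and repeating. Since $\mst(S)$ has maximum degree five, each non-root group contains one parent together with at most four children; groups of size five would be split into a four-node sub-group (processed by Theorem~\ref{thm:fourpoints}) and a two-node remainder (processed as in the $180^\circ$ base case). The root group needs separate treatment because the root may itself have as many as five children.

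For every four-node group I would invoke Theorem~\ref{thm:fourpoints} to orient the four antennas. This simultaneously (i) makes the group internally strongly connected at radius equal to the group's maximum pairwise distance, which is at most $2$ because all members are within a two-edge MST path and every MST edge has length at most $1$, and (ii) causes the union of the four wedges to cover the entire plane. Groups of two or three nodes I would orient by hand so that the same two properties (internal strong connectivity and plane coverage in every direction containing an attached subtree) still hold. I would then run an induction on the number of groups, carrying the hypothesis, analogous to the $180^\circ$ proof, that each processed subtree is strongly connected in its own right and that its root can reach every hop within a suitable short distance. For the inductive step I would orient the root group's antennas via Theorem~\ref{thm:fourpoints}; plane coverage then guarantees that every attached subtree root $q$ (with $|cq|\le 1$ for its MST parent $c$ in the root group) lies in the wedge of some group member $m$, at distance at most $|mc|+|cq|$, i.e.\ at most group diameter plus $1$. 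The reverse direction from $q$ back to $m$ is handled symmetrically, using the plane coverage of $q$'s own group applied to $c$.

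The hard part will be checking that the constant $7$ really absorbs every stacked chain of hops that can arise. The worst case composes (a) an intra-group hop of length up to $2$ from the transmitter to the group member whose wedge happens to cover the attachment direction, (b) an attachment edge of length up to $1$, and (c) a further intra-group hop on the receiving side before the message is finally routed to its destination inside the receiving subtree; an analogous chain appears in the upward direction because the antenna of Theorem~\ref{thm:fourpoints} covering a given external point is generally not the same as the MST-adjacent group member. In addition one must still verify plane coverage and internal connectivity for the degenerate small groups, give a special orientation for the root group (up to five children, lying in a halfplane as in the $180^\circ$ proof), and preserve the inductive invariant about reaching nearby hops. Orchestrating this case analysis so that the longest resulting communication hop is never more than $7$, rather than appealing to a tighter but more delicate constant, is the main technical obstacle.
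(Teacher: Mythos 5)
Your overall strategy --- group the MST nodes, apply Theorem~\ref{thm:fourpoints} inside each group to obtain both internal strong connectivity and plane coverage, and induct on the number of groups --- is the paper's strategy, and your accounting of the worst-case hop lengths is in the right spirit. But there is a genuine gap in your choice of decomposition. By reusing the $180^\circ$ grouping (a height-one node together with its current children), you produce groups of only two or three nodes, and for those you merely assert that you will orient the antennas ``by hand'' so that they are internally strongly connected \emph{and} cover every direction in which a subtree attaches. With $90^\circ$ antennas this is the crux of the whole theorem, not a routine check: two wedges cover at most half of all directions and three cover at most three quarters, while the attached subtrees can sit in essentially arbitrary directions (a single group member may have up to four MST neighbours outside the group, and mutual visibility inside the group further constrains the admissible orientations). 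In general some subtree root $q$ will lie outside every wedge of the group, no group member can initiate a path to $q$, and the induction collapses. Splitting a five-node group into a four-node piece plus a two-node remainder does not cure this, since the remainder has the same problem.

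The paper avoids the issue by using a different decomposition: it repeatedly peels off a \emph{smallest subtree of four or more nodes whose removal keeps the tree connected}, so that every group except possibly the root group has at least four nodes. Four representative nodes per group then feed into Theorem~\ref{thm:fourpoints} and genuinely cover the whole plane; the remaining (non-representative) members simply aim at a nearest representative and need no coverage property of their own. Only the root group can be small, and the paper handles it by re-attaching it to an adjacent full group --- which is exactly where the constant degrades from $5$ to $7$. To repair your argument you would need either to adopt such a size-at-least-four decomposition or to supply the missing (and, I believe, false in general) coverage argument for two- and three-node groups.
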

\begin{proof}
The case when $S$ consists of two points $a$ and $b$ is trivial: orient the antennas
at $a$ and $b$ to point to each other.
If $S$ consists of three points $a, b$ and $c$, then $\triangle abc$ has at least two angles
strictly smaller than $90^\circ$. Orient the antennas at the apexes of these two angles
to cover the entire triangle, then orient the third antenna toward either of the other two
(see Figure~\ref{fig:3points}a).
\begin{figure}[hptb]
\centering
\includegraphics[width=0.4\linewidth]{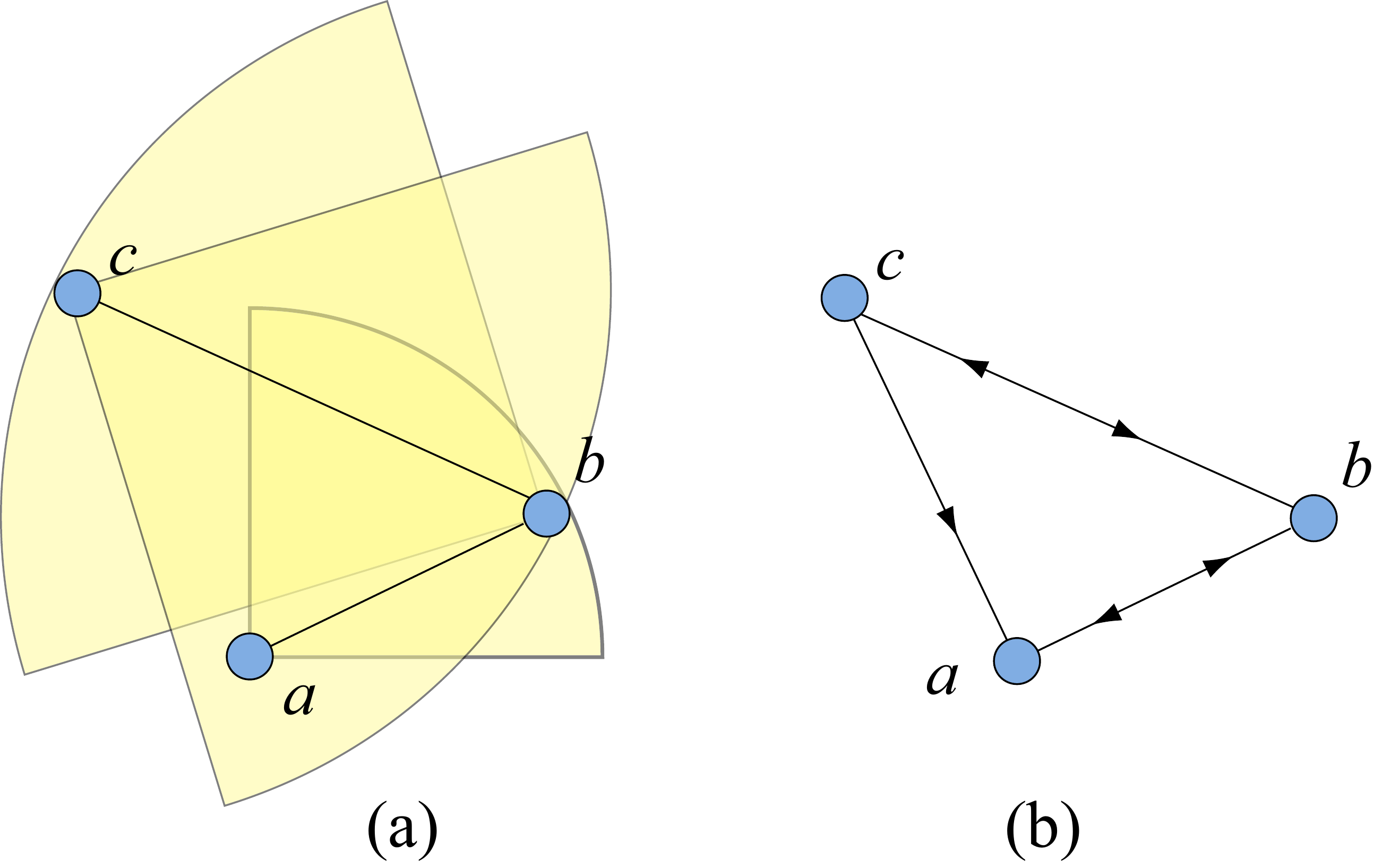}
\caption{A set $S$ of $3$ points. (a) Antenna orientation $\theta$ (b) Communication graph.}
\label{fig:3points}
\end{figure}
Then $r = 2$ suffices to form a strongly connected communication graph, since
$\max\{|ab|, |ac|, |bc|\} \le 2$.

We now turn to the general case $|S| \ge 4$.
We create groups of nodes in $\mst(S)$ recursively as follows.
Starting at the bottom of $\mst(S)$, identify a smallest subtree
$T \subseteq~\mst(S)$ of four or more nodes, whose removal does not disconnect
$\mst(S)$. It can be verified that such a subtree is topologically
equivalent to one of the subtrees shown in Figure~\ref{fig:nodegroups} (note
that the dashed connections are possible, but not required in the subtree).
Remove $T$ from $\mst(S)$ and recurse. The process stops when $\mst(S)$
is left with three or fewer nodes.
The result is a collection $\C$ of node groups, each with four or more vertices,
with the possible exception of the root subtree (the one containing
the root of $\mst(S)$). In each group we select four representative nodes, one
of which must be the root of the group subtree, and the other three could be
arbitrary.
For definitiveness we select the nodes shaded in Figure~\ref{fig:nodegroups} as
representative nodes.

\begin{figure}[hptb]
\centering
\includegraphics[width=0.55\linewidth]{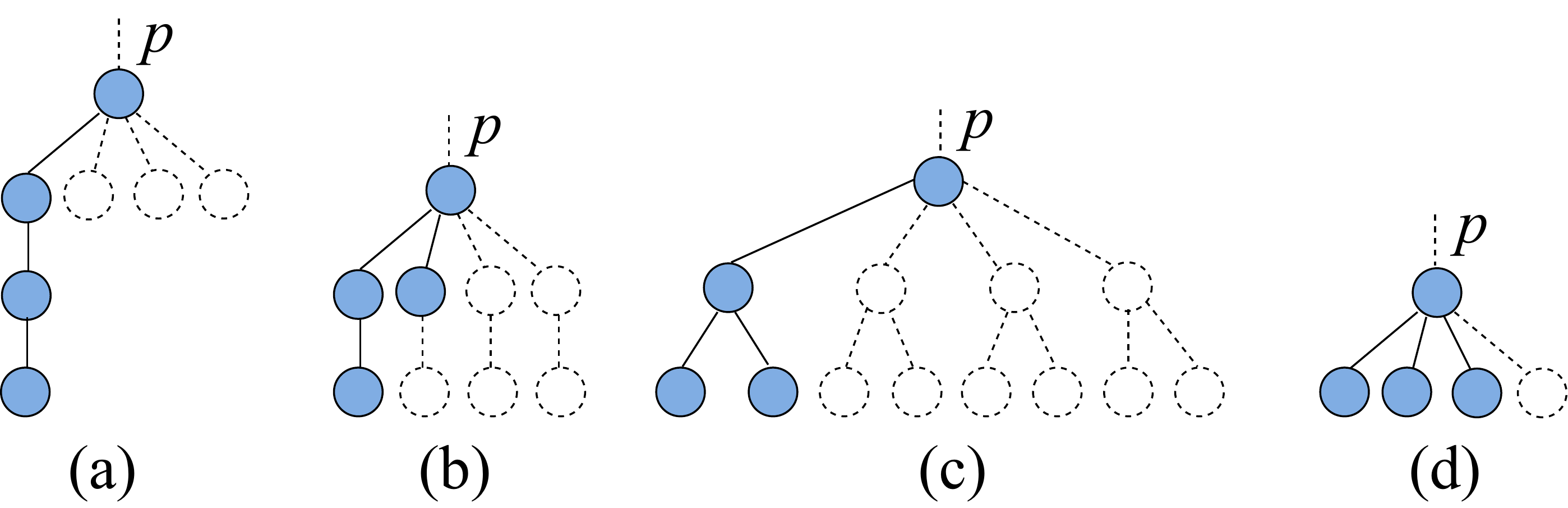}
\caption{Groups of four nodes that enable the use of Theorem~\ref{thm:fourpoints}. Dashed connections may or may not exist.}
\label{fig:nodegroups}
\end{figure}

%

We prove the theorem inductively on the number of groups $g$ in $\mst(S)$.
The base case with $g = 1$ corresponds to a group of nodes arranged in a subtree
topologically equivalent to one of the trees depicted in Figure~\ref{fig:nodegroups}.
The representative node set in each group is $R = \{p, a, b, c\}$, with $p$ the
root of the group subtree. Note that the maximum pairwise distance between
nodes in $R$ is $d_{\max} = 3$ for the subtree depicted in
Figure~\ref{fig:nodegroups}a, and $d_{\max} = 2$ for the subtrees
depicted in Figure~\ref{fig:nodegroups}(b, c, d).
We use Theorem~\ref{thm:fourpoints} on $R$ to determine an orientation of the
antennas at nodes in $R$ that strongly connects $R$, for $r = d_{max}$.
Then $r = d_{max}+1$ enables any node in $R$ to reach any hop
within unit distance, because the antennas at nodes in $R$
collectively cover the entire plane.

The inductive hypothesis is that there is an orientation of antennas at the nodes
of $\mst(S)$ consisting of $g$ or fewer groups, that satisfies the theorem.
In addition, the inductive hypothesis requires that the root of $\mst(S)$
can reach any hop within a unit distance. This additional requirement was already
established for the base case.

To prove the inductive step, consider a tree $\mst(S)$ with $g+1$ groups. Assume first
that the root group contains at least four nodes, so they are arranged in a subtree
$T \subseteq \mst(S)$ topologically equivalent to
one of the trees from Figure~\ref{fig:nodegroups}.
As in the base case, we orient the antennas at the representative nodes of
$T$ as in Theorem~\ref{thm:fourpoints}, to establish coverage of the plane and
strong connectivity between these nodes, for $r = d_{max}$.
For each non-representative node $x$, orient the antenna at $x$ to point towards
a closest representative node $y$.
A simple analysis of the tree topologies from Figure~\ref{fig:nodegroups}
indicate that, in order to establish a connection from $x$ to $y$, a
radius of $1$ for the antenna at $x$ suffices for the cases depicted in
Figure~\ref{fig:nodegroups}(a,d), and a radius of $2$ suffices for the cases depicted
in Figure~\ref{fig:nodegroups}(b, c). Summing up these values with $d_{max}$,
we obtain that $r = 4$ establishes full connectivity between the nodes of $T$
(since one of the nodes in $R$ can reach $x$ in this case as well).
We now show that $r = 5$ guarantee strong connectivity of $\mst(S)$.

\begin{figure}[hptb]
\centering
\includegraphics[width=\linewidth]{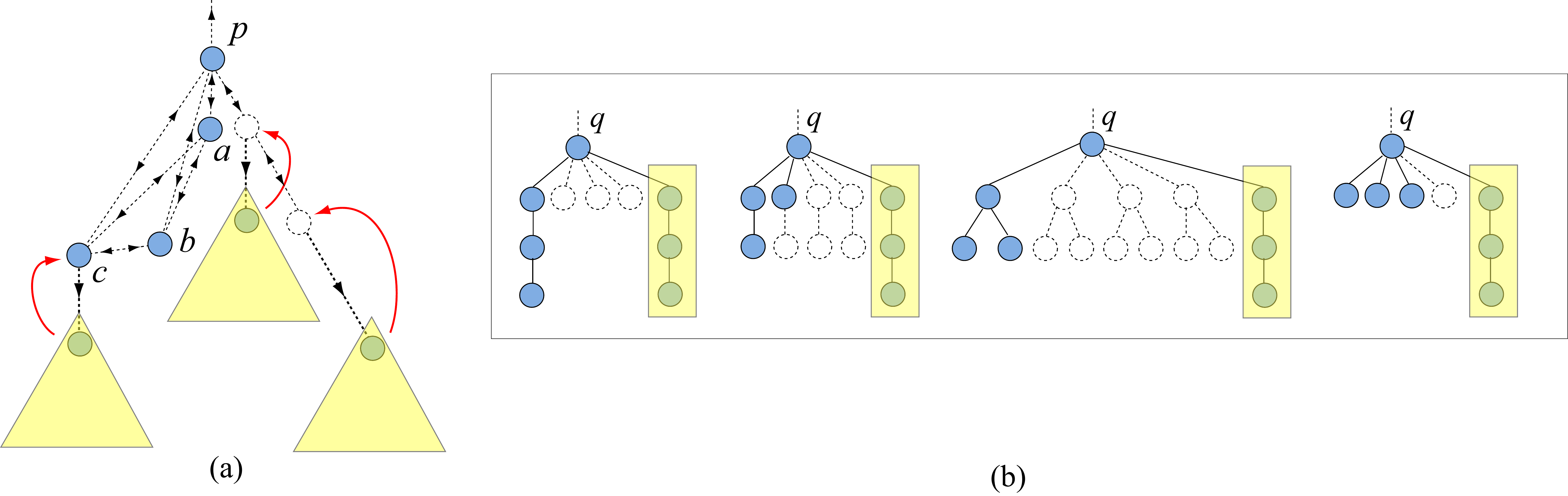}
\caption{(a) The inductive step for Theorem~\ref{thm:90main}. The communication paths indicated by circular arcs are guaranteed by the inductive hypothesis. The directed dashed edges represent communication paths established in the inductive step. (b) The root subtree with three nodes (boxed) viewed as part of a child subtree.}
\label{fig:90induction}
\end{figure}

The inductive hypothesis, along with the fact that each child in $\mst(S)$ is
within unit distance from its parent, implies that each subtree attached to a node
$x \in T$ can send messages up to $x$
(see the circular arcs in Figure~\ref{fig:90induction}a).
We have established that a setting of $r = 4$ enables strong connectivity
between the nodes of $T$. It follows that $r = 5$ enables each node
$x \in T$ to reach each child $y$ of $x$, because with this setting at
least one of the nodes in $R$ can reach $y$ (since their antennas cover the entire plane),
and $x$ can reach any node in $R$.
In addition, $r = 5$ enables the root of $\mst(S)$ to reach any hop
within unit distance (by a similar argument).
This settles the inductive step for this case.

If the node group at the root of $\mst(S)$ contains fewer than four nodes,
this group can be viewed as attached to the root $q$ of any adjacent node group
in $\mst(S)$. This idea is illustrated in Figure~\ref{fig:90induction}b, where
a root subtree of three nodes is attached in turn to each ``full'' subtree (with four or more nodes).
Regardless of the topological structure of the root subtree $T$, the
maximum distance between any two nodes in $T$ does not exceed $2$.
Orient the antennas at each node in $T$ towards $q$. Then
each node in $T$ can reach $q$ with $r = 3$. A simple analysis
of the configurations from Figure~\ref{fig:90induction}b shows that
a representative node of the subtree rooted at $q$ can reach any node in $T$
with an increase of $2$ in its transmission radius. Then the same analysis
as before shows that $r = 7$ settles the inductive step. It is likely that
a more complex analysis of this special case (with the root group composed of
three or fewer nodes) can maintain the previously established bound of $r=5$.
Such an analysis is left for future work.
\end{proof}


\end{document}